\newcommand{\is}{\textsc{independent set}}
\newcommand{\maxsmp}{\textsc{max-smpo}}
\newcommand{\kernel}{\textsc{max-kernel-po}}
\newcommand{\maxsmti}{\textsc{max-smti}}
\newcommand{\Rset}{\ensuremath{\mathbb R}}
\newcommand{\cI}{\ensuremath{\mathcal I}}
\theoremstyle{plain}
\newtheorem{thm}{Theorem}
\newtheorem{theorem}[thm]{Theorem}
\newtheorem{proposition}[thm]{Proposition}
\newtheorem{lemma}[thm]{Lemma}
\newtheorem{claim}[thm]{Claim}
\theoremstyle{definition}
\newtheorem{remark}[thm]{Remark}
\newtheorem{example}[thm]{Example}
\title{Maximum Stable Matching with Matroids and Partial Orders
\thanks{Some of the results in this paper have appeared in the 6th SIAM Symposium on Simplicity of Algorithms (SOSA 2023) \cite{csaji2022approximation}}}
\author{
Gergely Csáji\thanks{Department of Operations Research, Eötvös Loránd University, Budapest, Hungary,  and Mechanism Design Research Group, Institute of Economics and Regional Studies, Budapest, Hungary. Email: \texttt{csaji.gergely@krtk.hu}, \texttt{csajigergely@student.elte.hu}} 
\and
Tamás Király\thanks{ELKH-ELTE Egerv\'ary Research Group, Department of Operations Research, E\"otv\"os Loránd University, Budapest, Hungary. Email: \texttt{tamas.kiraly@ttk.elte.hu}}
\and
Yu Yokoi\thanks{Department of Mathematical and Computing Science,
School of Computing, Tokyo Institute of Technology, Tokyo, Japan. 
Email: \texttt{yokoi@c.titech.ac.jp}}}
\begin{document}

\maketitle

\begin{abstract}
  The Stable Marriage problem (SM), solved by the famous deferred acceptance algorithm of Gale and Shapley (GS), has many natural generalizations. If we allow ties in preferences, then the problem of finding a maximum stable matching becomes NP-hard, and the best known approximation ratio is $1.5$ (McDermid 2009, Paluch 2011, Z. Király 2012), achievable by running GS on a cleverly constructed modified instance. Another elegant generalization of SM is the matroid kernel problem introduced by Fleiner (2001), which is solvable in polynomial time using an abstract matroidal version of GS. Our main result is a simple $1.5$-approximation algorithm for the matroid kernel problem when preferences are given as interval orders --- a broad subclass of partial orders that covers many applications beyond preferences with ties. In addition, for the bipartite matching case, we show that the output of our algorithm also $1.5$-approximates the LP-optimum of the relaxation of the corresponding Integer Program, which shows that the integrality gap is at most $1.5$ for the interval order case. 
  To contrast this with hardness results, we show that if arbitrary partial orders are allowed in the preferences, then even in the bipartite matching case, the problem becomes hard to approximate within a factor better than $2$ assuming the Unique Games Conjecture, and the integrality gap becomes $2$.  
\end{abstract}

\section{Introduction}
The deferred acceptance algorithm of Gale and Shapley \cite{GS62}  is a quintessential example of a simple combinatorial algorithm that has wide-ranging applications, in such diverse areas as healtchcare labor markets, kidney exchange planning, project allocations, and school choice mechanisms. The original stable marriage problem solved by the Gale--Shapley algorithm has been generalized in many directions, and the mathematical research in the area is still thriving, 60 years after the original paper. 

The aim of this paper is to bring together two directions in which the problem has been extended. One is the design of approximation algorithms for finding a maximum stable matching when ties are allowed in the preference lists. The other is the generalization of the stable marriage problem to matroid intersection, in particular, the matroid kernel problem introduced and solved by Fleiner \cite{fleiner2001matroid} using an abstract version of the Gale--Shapley algorithm.

We show that the best known approximation ratio of $1.5$ for the maximum stable marriage problem with ties \cite{Mcdermid09} can also be achieved for the matroid kernel problem with ties. This extension is made possible by a new theorem on basis exchanges for two disjoint bases of a matroid, which may be of independent interest. 

In addition to extending the approximation result to matroids, we also go beyond preferences with ties, and achieve the same approximation ratio for preferences given by interval orders. Interval orders are partial orders that can be obtained as left-to-right precedence relations of intervals on the real line. Equivalently, they can be characterized by the property that the disjoint union of two comparable pairs contains a third comparable pair. Several stability notions that have appeared in the literature can be modeled as interval orders (e.g., $\alpha$-stability \cite{anshelevich2013anarchy,pini2013stability}, near-stability \cite{chen2021matchings}). We show that a simple variant of the Gale--Shapley algorithm applied to carefully constructed modified instances of the problem gives 
a $1.5$-approximation for the maximum matroid kernel problem with interval orders.

We complement this result with a new hardness result for the maximum stable marriage problem with arbitrary partial orders. We show that, assuming the Unique Games Conjecture \cite{khot2002power}, it is NP-hard to approximate the problem within a factor of $2-\delta$ for any $\delta>0$.



\subsection{Basic definitions}\label{sec:def}

\subsubsection*{Partial orders, weak orders, and stable marriage with ties.} A \emph{(strict) partial order} on a ground set $S$, denoted by $\succ$, is an irreflexive, transitive, and asymmetric binary relation%
\footnote{In literature, the term partial order is also used to refer to a \emph{reflexive}, transitive, and antisymmetric binary relation, which we call a \emph{non-strict partial order}. Strict partial orders and non-strict ones are equivalent in the sense that there is a one-to-one correspondence between them; a strict partial order is converted to a non-strict partial order by adding binary relations $x\succ x$ for all $x\in S$, and vise versa.}
 on $S$. In this paper, we always assume that $|S|$ is finite. If several partial orders are given, we use indices to distinguish them.

Two elements $x$ and $y$ are \emph{incomparable} if $x \not \succ y$ and $x \not \prec y$. A partial order is a \emph{weak order} if being incomparable is a transitive relation. In this case, we use $x\sim y$ to indicate that $x$ and $y$  are incomparable,  or in other words, tied. In a weak order, the ground set $S$ is partitioned into equivalence classes of tied elements.

In the \emph{stable marriage problem with ties and incomplete lists} (\textsc{smti})%
\footnote{While the \emph{stable marriage problem} was orinally defined for simple bipartite graphs, we use the term to mean a generalized version where graphs can be multigraphs, i.e., parallel edges are allowed.}%
, we are given a bipartite graph $G=(U,W;E)$, and a weak order $\succ_v$ on $\delta_G(v)$ for every vertex $v \in U\cup W$, where $\delta_G(v)$ denotes the set of edges incident to $v$ in $G$. Given a matching $N$ in $G$, an edge $e \in E\setminus N$ with endpoints $u, w$ is a \emph{blocking edge} for $N$ if the following two conditions hold:
\begin{itemize}
    \item $\delta_G(u)\cap N=\emptyset$ or $e \succ_u N(u)$,
    \item $\delta_G(w)\cap N=\emptyset$ or $e \succ_w N(w)$,
\end{itemize}
 where $N(u)$ denotes the edge of $N$ incident to $u$ if it exists. The matching $N$ is \emph{stable} if no edge blocks it. The \maxsmti\ problem is to find a stable matching of maximum size. The problem is NP-hard \cite{IMMM99}, and the best known polynomial-time approximation algorithm has an approximation ratio of $1.5$ \cite{Mcdermid09}. It is also known that no approximation ratio better than $4/3$ is achievable assuming the Unique Games Conjecture (UGC) \cite{yanagisawa2007approximation}. In this paper, we will also consider the \emph{maximum stable marriage problem with partial orders} (\maxsmp), where the preferences are not required to be weak orders. To our knowledge, no stronger hardness of approximation results have previously been known for \maxsmp\ than those for \maxsmti.

If no ties are allowed at all in the preference orders (i.e., the preferences are strict), then we obtain the standard stable marriage problem, where all stable matchings have the same size by the so-called ``rural hospitals theorem'' \cite{Roth84, GS85}, and the Gale--Shapley algorithm finds one efficiently. In fact, we can obtain a stable matching of a \maxsmti\ or \maxsmp\ instance by modifying each agent's preferences to an arbitrary total order that is consistent with the agent's partial order, and running the Gale--Shapley algorithm with the resulting total orders. However, the size of the stable matching obtained depends on the choice of total orders, and it can be as small as half the optimum.

\subsubsection*{Matroid kernels.} A natural way to generalize the stable marriage problem and the \maxsmti\ problem is to allow agents to have multiple partners, but have some restrictions on the possible sets of partners. There are several models in this vein: the hospitals-residents problem, the college admissions problem with common quotas, classified stable matchings etc. (see Section \ref{sec:related} for more details). 

From a theoretical point of view, a particularly elegant generalization is the matroid kernel problem defined by Fleiner \cite{fleiner2001matroid}. 
We use the notation and terminology of Schrijver \cite[Chapter~39]{schrijver2003combinatorial} for matroids, unless otherwise stated. See Section~\ref{sec:improving} for the definitions of some basic notions of matroids (such as independence, bases, fundamental circuits).

A \emph{partially ordered matroid} is a triple $(S,\cI,\succ)$ where $S$ is the set of elements, $\cI\subseteq 2^S$ is the family of independent sets of the matroid, and $\succ$ is a partial order on $S$. Let $M_1=(S,\cI_1,\succ_1 )$ and $M_2=(S,\cI_2,\succ_2 )$ be partially ordered matroids on the same ground set $S$.
A common independent set $X \in \cI_1 \cap \cI_2$ is an \emph{$(M_1,M_2)$-kernel} if for every $y \in S \setminus X$ there exists $i \in \{1,2\}$ such that $X+y \notin \cI_i$ and $y \not\succ_i x$ for every $x \in X$ for which $X-x+y \in \cI_i$. If there is an element $y \in S \setminus X$ for which this does not hold, then we say that $y$ \emph{blocks} $X$.
The \kernel\ problem is to find an $(M_1,M_2)$-kernel of maximum size.

If $M_1,M_2$ are partition matroids and $\succ_1,\succ_2$ are weak orders, then the \kernel\ problem is equivalent to \maxsmti. Indeed, we can construct a bipartite graph by considering the partition classes of the two matroids as vertices and the elements of $S$ as edges, whose two endpoints are the vertices corresponding to the two partition classes containing that element. When each vertex has a weak order that is the restriction of $\succ_1$ or $\succ_2$ to its incident edges, then there is a one-to-one correspondence between the stable matchings of this bipartite graph and the $(M_1,M_2)$-kernels.

To see how the matroid kernel problem is considerably more general than the stable marriage problem, observe that in the above construction, we have a 1-uniform matroid for each partition class. We can replace these with arbitrary matroids, and still have a special case of the matroid kernel problem. This corresponds to a generalization of the many-to-many stable matching, where each agent has a matroid on the set of incident edges, and the chosen edge set is required to be independent in each of these matroids. Furthermore, we can include upper bounds on the number of edges incident to subsets of agents on both sides, provided that for each side these upper bounds are given by a polymatroid function.

Fleiner \cite{fleiner2001matroid, fleiner2003fixed} considered the matroid kernel problem without incomparability, i.e., when $\succ_1$ and $\succ_2$ are total orders. He showed that matroid kernels always exist and have
the same size (in fact, they have the same span in both matroids). He also gave a matroidal version of the Gale--Shapley algorithm that finds a matroid kernel efficiently.
 In case of weak orders, the kernels may have different sizes, and it is NP-hard to find a largest one, since this problem is a generalization of \maxsmti.

\subsubsection*{Interval orders.} Interval orders form a useful and well-studied subclass of partial orders. To give an intuitive definition, let the ground set $S$ be a family of intervals on the real line: $S=\{[a_1,b_1],\dots, [a_k,b_k]\}$. Then, we can define a partial order $\succ$ on $S$ by $[a_i,b_i] \succ [a_j,b_j] \Leftrightarrow a_i>b_j$ (i.e., the $i$th interval is completely to the right of the $j$th interval). An \emph{interval order} is a partial order that can be obtained in this way. An alternative definition, which we will later use, is that an interval order is a partial order that has no induced sub-poset of size 4 isomorphic to two disjoint chains of size 2 (with no other relations between the 4 elements). In other words, 
if there are distinct elements $x,y,z, w\in S$ with $x\succ y$ and $z\succ w$, then we have $x\succ w$ or $z\succ y$.
Due to this property, interval orders are sometimes called $(2+2)$-free posets.

Weak orders are obviously interval orders. A larger notable subclass is that of {\em semiorders}, where a numerical utility value is assigned to each element of $S$, and two elements are considered incomparable if their utilities differ by at most a fixed threshold $\delta$. This corresponds to an interval order where all intervals have the same length. In the context of matching problems with preferences, semiorders have been used to model the reluctance of agents to switch partners if the improvement is too small; see e.g.\ the notion of locally nearly stable matchings in \cite{chen2021matchings}. Interval orders provide a more refined way to model this kind of reluctance by allowing the incomparability intervals to have different sizes.

\subsection{Our contribution}

The key tool for generalizing the $1.5$-approximation to matroid kernels is a result on the existence of a perfect matching of certain types of exchangeable pairs in a matroid. This is presented as Theorem \ref{thm:matching} in Section \ref{sec:improving}. This result may be of independent interest, as it is a previously unknown elegant property of basis exchange graphs.

Using Theorem \ref{thm:matching}, we show in Section \ref{sec:kernel} that there is a $1.5$-approximation algorithm for \kernel\ with interval orders. That is, we show the following theorem.
\begin{theorem}\label{thm:main-approx}
Given two partially ordered matroids $M_1=(S,\cI_1,\succ_1 )$ and $M_2=(S,\cI_2,\succ_2 )$ with $\succ_1$ and $\succ_2$ being interval orders, one can find an $(M_1, M_2)$-kernel $A$ such that $|A|$ is at least $\frac{2}{3}$ of the size of a maximum $(M_1, M_2)$-kernel.
\end{theorem}
We remark that Theorem~\ref{thm:main-approx} implies a $1.5$-approximation algorithm for finding a maximum solution in the many-to-many stable matching problem on $G=(U,W;E)$ where each vertex $v$ has a matroid constraint and an interval order on $\delta_G(v)$. To reduce the problem to the setting in Theorem~\ref{thm:main-approx}, we construct an interval order $\succ_U$ on the ground set $E$ consistent to the orders of all vertices in $U$ as follows. (The construction of $\succ_W$ is similar). Assign indices to elements in $U$ as $u_1, u_2,\dots, u_{|U|}$ arbitrarily, and then define $\succ_U$ by taking the union of binary relations in $\{\succ_u\}_{u\in U}$ and adding the relations $e\succ_u f$ for all $e\in \delta_G(u_i)$ and $f\in \delta_G(u_j)$ with $i<j$.
We see that $\succ_U$ is an interval order, and its restriction to $\delta_G(u)$ is $\succ_u$ for any $u\in U$. Set $M_1=(E, \cI_U,\succ_U)$ and $M_2=(E, \cI_W, \succ_W)$, where $\cI_U$ and $\cI_W$ are respectively the direct sums of the matroids of vertices in $U$ and $V$. Then $(M_1, M_2)$-kernels correspond to stable matchings on $G$.

In the case of the stable marriage setting, in Section~\ref{sec:inapprox}, we show an even stronger statement than Theorem~\ref{thm:main-approx}: the output of the algorithm also $1.5$-approximates the LP optimum of the corresponding natural Integer Program. This complements the previously known lower bound of $1.5-o(1)$ of Iwama et al. \cite{iwama201425} on the integrality gap of \maxsmti.

Our algorithm consists of three steps: (1) constructing an instance of the matroid kernel problem with total orders on the extended ground set obtained by replacing each element by three parallel elements, (2) running Fleiner's algorithm on the new instance, and (3) projecting the solution to the original ground set. 
The running time of the algorithm is quadratic in $|S|$ (provided that the independence oracles of matroids are available), and linear in $|S|$ in case of partition matroids, which corresponds to the many-to-many stable matching problem with parallel edges allowed. Since interval orders can model $\alpha$-stability \cite{anshelevich2013anarchy,pini2013stability} and near-stability \cite{chen2021matchings}, our algorithm implies $1.5$-approximation for these stability notions.

The approximation ratio attained by our algorithm is best possible under UGC. As explained at the end of Section~\ref{sec:apxproof}, this fact easily follows from the result of Askalidis et al. \cite{askalidis2013socially}, who proved that \textsc{max-smiss},  the {\em maximum stable marriage problem with incomplete lists under social stability}, is inapproximable within $\frac{3}{2}-\delta$ under UGC. As \textsc{max-smiss} can be seen as a special case of the maximum stable marriage problem with interval orders, the following proposition follows.
\begin{proposition}\label{prop:hard}
Assuming UGC, it is NP-hard to approximate \kernel\ with interval orders within $\frac{3}{2}-\delta$ for any $\delta>0$. The statement holds even for the maximum stable marriage problem with interval orders.
\end{proposition}

We then show that the $1.5$-approximability shown in Theorem~\ref{thm:main-approx} cannot be extended to the setting with general partial orders, even in the stable marriage case.
In Section \ref{sec:inapprox}, we investigate \maxsmp, the version of maximum stable marriage where preferences can be arbitrary partial orders. 
Note that for this problem and also for the more general \kernel, $2$-approximation is easily attained; as explained before, a stable matching (a matroid kernel) can be found by running the (matroidal version of) Gale--Shapley algorithm with total orders consistent with the original partial orders. Since stability implies that the output matching (common independent set) is maximal, its size is at least half of the optimum. 
We show that for general partial orders, we cannot beat this trivial approximation ratio even for the stable marriage case under UGC.
\begin{theorem}
\label{thm:maxsmp}
Assuming UGC, it is NP-hard to approximate \maxsmp\ within $2-\delta$ for any $\delta >0$.
\end{theorem}
The proof of Theorem~\ref{thm:maxsmp} uses a result of Bansal and Khot \cite{bansal2009optimal} about the hardness of finding an independent set of size $\epsilon n$ in a graph that has two disjoint independent sets of size $(\frac{1}{2}-\epsilon)n$ each. 
We also show in Section~\ref{sec:inapprox} that the integrality gap of the LP relaxation of a natural IP of \maxsmp\ has integrality gap at least $2$ for general partial orders while it is at most $1.5$ for interval orders as mentioned above. 
Our results suggest that maybe the class of interval orders is  the right generalization for which a nontrivial approximation is still possible.

\subsection{Related work}\label{sec:related} 
The \emph{stable marriage problem with ties and incomplete lists} ({\sc smti}) was first studied by Iwama et al.~\cite{IMMM99}, who showed the NP-hardness of \maxsmti. Since then, various algorithms have been proposed to improve the approximation ratio \cite{halldorsson2007improved,IMY07, IMY08, Kiraly11}, and the current best ratio is $1.5$ by a polynomial-time algorithm of McDermid \cite{Mcdermid09}, where the same ratio is attained by linear-time algorithms of Paluch \cite{paluch2011faster, Paluch14} and Kir\'aly \cite{kiraly2012linear, Kiraly13}.
The $1.5$-approximability extends to the many-to-one matching setting \cite{Kiraly13}
and the student-project allocation problem with ties \cite{cooper20183}.
As for the inapproximability of \maxsmti, Halldórsson et al. \cite{halldorsson2002inapproximability} showed that it is NP-hard to approximate it within some constant factor. Later, inapproximability results have been improved, especially assuming stronger complexity-theoretic conjectures. Yanagisawa \cite{yanagisawa2007approximation} and Huang et al. \cite{huang2015tightsmti} showed that assuming the Unique Games Conjecture, there is no $(\frac{4}{3}-\varepsilon)$-approximation for any $\varepsilon >0$, if P$\ne$NP. By a recent work by Dudycz, Manurangsi and Marcinkowski \cite{smti1.5inapprox}, it follows that, assuming the Strong Unique Games Conjecture or the Small Set Expansion Hypothesis, there cannot even be a $(\frac{3}{2}-\varepsilon)$-approximation algorithm for \maxsmti, if P$\ne$NP. 

The stable marriage problem also has generalizations in which constraints are imposed on the possible sets of edges. Bir{\'o} et al.~\cite{BFIM10} studied the {\em college admissions problem with common quotas}. It is a many-to-one matching problem between students and colleges, where not only individual colleges but also certain subsets of colleges have upper quotas. Huang \cite{CCH10} investigated the {\em classified stable matching problem}, a many-to-one matching model in which each individual college has upper and lower quotas for subsets of students. In the absence of lower quotas, it was shown that a stable matching exists in these models if the constraints have a laminar structure. 
Yokoi~\cite{yokoi2021approximation} considered a many-to-many matching model with ties and laminar constraints and presented a $1.5$-approximation algorithm. Its approximation analysis depends on the base orderability of laminar matroids and cannot extend to the general matroid setting.

The stable matching problem has been studied with many different types of preference structures. 
There are studies on super and strong stabilities with partial order preferences \cite{irving2003strong, rastegari2014reasoning}.
Condon et al. \cite{rastegari2014reasoning} studied super stability with partial orders, where a blocking edge is defined as an edge $e=(u,w)$ such that each of $u$ and $w$ either strictly prefers the other to its current assignment, or are indifferent between them (i.e. the two are incomparable in the partial order). 
They showed that 
we can find a super stable matching in polynomial-time if one exists. 
Irving, Manlove, and Scott \cite{irving2003strong} studied strong stability, which differs from super stability in the aspect that an edge $e=(u,w)$ only blocks a matching if at least one of $u,w$ strictly prefers the other to their current assignment. They showed that with arbitrary partial orders in the preferences, it is NP-hard to decide if a strongly stable matching exists. With some modifications to their reduction from 3-SAT, it can also be shown that this hardness holds even in the case of semiorders, which is a special case of interval orders. 

A closely related area is stable matchings with uncertain or changing preferences, where the preferences of the agents may be partially unknown, or may depend on some random factors and may change over time, which has also been the focus of interest lately \cite{aziz2020stable,chen2018stable,miyazaki2019jointly,bredereck2020adapting}. The usual purpose here is to find matchings that are stable with probability one, if there are any, or otherwise find matchings that are stable with maximum probability. In particular, as Aziz et al.\ \cite{aziz2020stable} mentioned, uncertain preferences are strongly connected to partial order preferences. For example, if we suppose that there is a set of possible preference lists for each agent, then deciding if there is a matching that is stable with any possible choices for these preference lists can be reduced to finding a super stable matching where the agents have partial orders. 

Another similar problem is robust stable matchings and locally nearly stable matchings \cite{chen2021matchings}. Intuitively speaking, a robust stable matching is one that is stable, and remains stable even if each agent is allowed to make some swaps (i.e. switch two adjacent entries) in their preference lists, while a locally nearly stable matching is a matching that can be made stable if each agent is allowed to make some swaps in their list, or equivalently, there are no blocking edges, where both agents improve by a lot. Finding robust stable matchings reduces to finding super stable matchings, while finding a maximum size locally nearly stable matching reduces to finding a maximum size (weakly) stable matching with partial order preferences (in this case the arising orders are semiorders). 


\section{Existence of perfect matching of exchange edges in matroids} \label{sec:improving}

In this section, we present our key tool, a result on exchange properties of matroids. 

A {\em matroid} is a pair $(S, \cI)$ of a finite set $S$ and a nonempty family $\cI\subseteq 2^S$ satisfying the following two axioms: (i) $A\subseteq B\in \cI$ implies $A\in \cI$, and (ii) for any $A, B\in \cI$ with $|A|<|B|$, there is an element $x\in B\setminus A$ with $A+x\in \cI$. A set in $\cI$ is called an {\em independent set}, and an inclusion-wise maximal one is called a {\em base}. By axiom (ii), all bases have the same size, which is called the {\em rank} of the matroid.
A {\em circuit} is an inclusion-wise minimal dependent set. The \emph{fundamental circuit} of an element $x\in S\setminus B$ for a base $B$, denoted by $C_B(x)$, is the unique circuit in $B+x$. By a slight abuse of notation, we will also use $C_I(x)$ for an independent set $I$ and an element $x\in S\setminus I$ to denote the unique circuit in $I+x$ if it exists. Any pair of circuits satisfies the following property.

\begin{proposition}[Strong circuit axiom]
If $C,C'$ are circuits, $x\in C\setminus C'$, and $y\in C \cap C'$, then there is a circuit $C'' \subseteq C \cup C'$ such that $x \in C''$ and $y \notin C''$. \qed 
\end{proposition}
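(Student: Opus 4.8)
The plan is to reduce the statement to a purely local claim about a restricted matroid. Set $T=(C\cup C')-y$; since $x\in C$ and $x\neq y$, we have $x\in T$. A circuit $C''$ of the matroid with $x\in C''\subseteq T$ is exactly a circuit of the restriction $M|T$ through $x$, and such a circuit exists precisely when $x$ is not a coloop of $M|T$, i.e. precisely when $x$ is spanned by the remaining elements: $x\in \mathrm{cl}(T-x)$, where $\mathrm{cl}$ is the closure (span) operator. So it suffices to prove $x\in \mathrm{cl}\big(((C\cup C')-y)-x\big)$. Once this is known, the desired circuit is produced constructively: take a basis $I$ of $T-x$; since $x\in\mathrm{cl}(T-x)$ the set $T-x$ and $T$ have the same rank, so $I$ is also a basis of $T$, and the fundamental circuit $C_I(x)\subseteq I+x\subseteq T$ satisfies $x\in C_I(x)$ and $y\notin C_I(x)$, giving $C''$.

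The heart of the argument is then a short closure computation. Because $C$ is a circuit containing $x$, we have $x\in \mathrm{cl}(C-x)$, and since $y\in C$, $y\neq x$, we may write $C-x=(C-x-y)\cup\{y\}$. Because $C'$ is a circuit containing $y$, we have $y\in \mathrm{cl}(C'-y)$, and the standard closure identity $\mathrm{cl}(A\cup \mathrm{cl}(B))=\mathrm{cl}(A\cup B)$ gives $\mathrm{cl}\big((C-x-y)\cup\{y\}\big)\subseteq \mathrm{cl}\big((C-x-y)\cup(C'-y)\big)$. Combining these yields $x\in \mathrm{cl}\big((C-x-y)\cup(C'-y)\big)$. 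Finally, since $x\in C\setminus C'$ we have $x\notin C'-y$, so $(C-x-y)\cup(C'-y)=((C\cup C')-y)-x$, which is exactly the membership we needed.

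The only delicate points are routine matroid bookkeeping: the circuit--closure correspondence (an element of a restricted matroid lies on a circuit iff it is not a coloop iff it is spanned by the others), the closure identity above, and the set equality that hinges on $x\notin C'$. I do not expect a genuine obstacle beyond carefully tracking which elements are deleted. For contrast, the more classical route is a minimal-counterexample argument built on weak circuit elimination: choosing a counterexample minimizing $|C\cup C'|$, one extracts a circuit $D\subseteq(C\cup C')-y$ (necessarily with $x\notin D$), fixes some $z\in C'\setminus C$ with $z\in D$, and applies the statement twice to the strictly smaller pairs $(C',D)$ and $(C,E)$, where $E$ is the circuit produced by the first application, to build a circuit through $x$ avoiding $y$ and contradict minimality. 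There the main care is verifying that both auxiliary unions are strictly smaller (using $x\notin C'\cup D$ and $z\notin C\cup E$) and that the membership hypotheses of each application hold; the closure proof avoids this bookkeeping, which is why I would present it first.
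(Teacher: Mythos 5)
Your closure argument is correct and complete: the reduction to showing $x\in\mathrm{cl}\bigl((C\cup C')-y-x\bigr)$ is valid (an element of a restriction lies on a circuit iff it is spanned by the other elements), the chain $x\in\mathrm{cl}(C-x)=\mathrm{cl}\bigl((C-x-y)\cup\{y\}\bigr)\subseteq\mathrm{cl}\bigl((C-x-y)\cup(C'-y)\bigr)$ uses only $x\in C$, $y\in C\cap C'$ and the identity $\mathrm{cl}(A\cup\mathrm{cl}(B))=\mathrm{cl}(A\cup B)$, and the final set identity indeed needs $x\notin C'$, which is the hypothesis $x\in C\setminus C'$. Extracting $C''$ as the fundamental circuit of $x$ over a basis of $T-x$ is also fine. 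There is, however, nothing in the paper to compare this against: the proposition is stated with an immediate \textsc{qed}, i.e.\ the authors invoke the strong circuit elimination axiom as a classical fact of matroid theory (it appears in the standard references they cite, such as Schrijver's book) rather than proving it. So your contribution is a self-contained proof of a black-boxed prerequisite. Of your two routes, the closure proof you lead with is the cleaner one; the minimal-counterexample argument via weak elimination that you sketch is the textbook-traditional alternative, and as you note it requires more bookkeeping to certify that both recursive applications are to strictly smaller unions. Either would serve; neither is needed for the paper's purposes, since the proposition is standard.
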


If we have a total order $\succ$ given on $S$, then the triple $M=(S,\cI,\succ)$ is called a \emph{(totally) ordered matroid}. A nice property of totally ordered matroids is that for any weight vector $w\in \Rset^S$ that satisfies $w_x>w_y \Leftrightarrow x \succ y$, the unique maximum weight base is the same. We call this base $A$ the \emph{optimal base} of $(S,\cI,\succ)$; it is characterized by the property that the worst element of $C_A(x)$ is $x$ for any $x \in S \setminus A$. Equivalently, a base $A$ is optimal with respect to $\succ$ if and only if any pair of elements $a\in A$ and $b\in S\setminus A$ with $A-a+b\in \cI$ satisfies $a\succ b$.

Here we  prove the theorem that will be our main tool in proving the approximation bound for our algorithm. To our knowledge, this result on exchanges has not been previously observed in the literature. See Remark~\ref{rem:exchange} for a comparison between a previously known property. We use the notation $[r]=\{1,2,\dots,r\}$.

\begin{theorem} \label{thm:matching}
Let $M=(S,\cI,\succ)$ be a totally ordered matroid of rank $r$. Let $A$ be the optimal base and $B$ be a base disjoint from $A$. Then, there is a perfect matching $a_ib_i$ $(i\in [r])$ between $A$ and $B$ such that $a_i\succ b_i$ and $B+a_i-b_i \in \cI$ for every $i \in [r]$. 
\end{theorem}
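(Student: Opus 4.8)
The plan is to reformulate the desired conclusion as the existence of a perfect matching in an auxiliary bipartite graph and then verify Hall's condition. First I would observe that, since $B$ is a base, for every $a \in A$ the set $B + a$ contains the unique fundamental circuit $C_B(a)$, and $B + a - b \in \cI$ holds precisely when $b \in C_B(a)$. Thus it suffices to find a perfect matching in the bipartite graph $H$ on vertex classes $A$ and $B$ in which $a \in A$ and $b \in B$ are joined exactly when $a \succ b$ and $b \in C_B(a)$.

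Since $|A| = |B| = r$, by Hall's theorem it is enough to check that $|N_H(A')| \ge |A'|$ for every $A' \subseteq A$. Writing $T = N_H(A')$, I would prove the stronger statement that $A' \cup (B \setminus T)$ is an independent set. Granting this, disjointness of $A$ and $B$ gives $|A'| + (r - |T|) = |A'| + |B \setminus T| \le r$, which rearranges to Hall's condition $|A'| \le |T|$.

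The heart of the argument is establishing independence of $A' \cup (B \setminus T)$, and this is where I expect the main difficulty to lie. Suppose for contradiction it contains a circuit $C$; since $A$ and $B$ are independent, $C$ meets both $A'$ and $B \setminus T$. Let $z$ be the $\succ$-worst element of $C$. By Lemma~\ref{lem:cycle}, $z \in S \setminus A$, and as $C \subseteq A' \cup (B\setminus T)$ this forces $z \in B \setminus T$. Now for every $a \in C \cap A'$ we have $a \succ z$ (because $z$ is worst in $C$); unwinding the definition of $T$, the fact that $z \notin T$ together with $a \in A'$ and $a \succ z$ forces $z \notin C_B(a)$. Consequently $C_B(a) \subseteq (B - z) + a$ for each such $a$, so every $a \in C \cap A'$ lies in the span of $B - z$. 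Hence $\mathrm{span}\big((C\cap A') \cup (B - z)\big) = \mathrm{span}(B - z)$, which does not contain $z$ because $B$ is independent. But $C - z \subseteq (C \cap A') \cup (B - z)$, so $z \notin \mathrm{span}(C - z)$, contradicting that $C$ is a circuit. This contradiction proves independence, and the theorem follows.

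The only delicate point is the bookkeeping showing $z \notin C_B(a)$ for all relevant $a$, which is exactly what prevents the $C\cap A'$ part from dragging $z$ into the span; everything else is a routine combination of Lemma~\ref{lem:cycle} with the elementary span/closure properties of matroids and Hall's theorem.
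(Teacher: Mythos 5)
Your proof is correct, and it verifies Hall's condition by a genuinely different route from the paper. Both arguments set up the same auxiliary bipartite graph (edges $ab$ with $a \succ b$ and $b \in C_B(a)$, which is exactly the condition $B+a-b\in\cI$) and both rely on Lemma~\ref{lem:cycle} to locate the worst element of a circuit. From there the paper argues by contradiction with a deficient set $X$: for each $x\in X$ it picks an extremal circuit $C_x$ with $C_x\cap A\subseteq X$ whose worst $B$-element is as good as possible, finds by pigeonhole two elements $x\prec x'$ with the same worst $B$-element, and derives a contradiction from the strong circuit axiom. You instead prove outright that $A'\cup(B\setminus T)$ is independent: any circuit $C$ inside it has its worst element $z$ in $B\setminus T$ by Lemma~\ref{lem:cycle}, every $a\in C\cap A'$ then satisfies $a\succ z$ and (by $z\notin T$) $z\notin C_B(a)$, hence $a\in\mathrm{span}(B-z)$, which forces $z\in\mathrm{span}(C-z)\subseteq\mathrm{span}(B-z)$, contradicting independence of $B$. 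Hall's condition then drops out of the rank bound $|A'|+|B\setminus T|\le r$. Your closure-based argument avoids the extremal circuit selection and the pigeonhole step entirely, uses the strong circuit axiom only through Lemma~\ref{lem:cycle}, and is arguably shorter and more transparent; the paper's version stays in purely circuit-axiomatic language. All the individual steps (the equivalence $B+a-b\in\cI\Leftrightarrow b\in C_B(a)$, monotonicity of span, and $z\notin\mathrm{span}(B-z)$ for a base $B$) check out.
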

\begin{proof}
Define $E_A=\{\,ab\in A \times B:\, A-a+b \in \cI\,\}$ and $E_B=\{\,ab\in A \times B:\, B+a-b \in \cI\,\}$.
\begin{claim}\label{cl:1}
Let $C$ be any circuit with $C\subseteq A\cup B$. For any element $a\in C\cap A$, there exists $b\in C\cap B$ with $ab\in E_A$. For any element $b\in C\cap B$, there exists $a\in C\cap A$ with $ab\in E_B$.
\end{claim}
\begin{proof}
We only show the first claim because the second one is shown symmetrically. Take any $a\in C\cap A$. Suppose conversely that $ab\not\in E_A$ for any $b\in C\cap B$. This means that, we have $a\not\in C_A(b)$ for any $b\in C\cap B$. Set $C'=C$ and repeatedly update it as follows while $C'\cap B\neq \emptyset$: (1) Take any $b\in C'\cap B$, (2) apply the strong circuit axiom to $C', C_A(b), a\in C'\setminus C_A(b)$, and $b\in C'\cap C_A(b)$ to obtain a circuit $C''$ satisfying $C''\subseteq C'\cup C_A(b)$, $a\in C''$, and $b\not\in C''$, (3) update $C'$ by $C''$. Then, $C'$ always satisfies $a\in C'\subseteq A\cup B$ and the size of $|C'\cap B|$ decreases monotonically. We finally obtain a circuit $C'$ with $C'\subseteq A$, a contradiction.   
\end{proof}

Define $E=\{\,ab\in E_B:\, a\succ b\,\}$. 
Then, showing the existence of a perfect matching in $G=(A, B;E)$ completes the proof of the theorem.
To this end, we show the following claim.
\begin{claim}\label{cl:2}
For any circuit $C$ with $C\subseteq A\cup B$, there exists $ab\in E$ with $\{a,b\}\subseteq  C$.
\end{claim}
\begin{proof}
By Claim~\ref{cl:1}, we see that there is a cycle $Q\subseteq E_A\cup E_B$ (possibly of length two) that is contained in $C$ and uses edges in $E_A$ and $E_B$ alternately. (Start at any element in $C$ and apply Claim~\ref{cl:1} repeatedly.) As $A$ is optimal, any $a'b'\in Q\cap E_A$ satisfies $a'\succ b'$. Then, there must exist a pair $ab\in Q\cap E_B$ with $a\succ b$ because otherwise the relation $\succ$ on the elements in $Q$ would become cyclic and could not form a total order. This $ab$ belongs to $E$. 
\end{proof}
We now show the existence of a perfect matching in $G=(A, B;E)$.
Suppose, to the contrary, there is no perfect matching in $G$. 
By Hall's theorem, then there exists a set $X \subseteq A$ such that $|X|>|\Gamma_G(X)|$, where $\Gamma_G(X)=\{b\in B: \exists a \in X, ab \in E\}$.
The size of the set $X\cup (B\setminus \Gamma_G(X))$ is larger than that of the base $B$, and hence there exists a circuit $C\subseteq X\cup (B\setminus \Gamma_G(X))$. By Claim~\ref{cl:2}, there exists $ab\in E$ included in $C$, which satisfies $a\in X$ and $b\in B\setminus \Gamma_G(X)$,  contradicting the definition of $\Gamma_G(X)$.
\end{proof}

\begin{remark}\label{rem:exchange}
It is a well-known fact that, for any two bases $A, B$ of a matroid $(S,\cI)$ of rank~$r$, there exists a perfect matching $a_ib_i$ $(i\in [r])$ between $A$ and $B$ such that $B+a_i-b_i\in\cI$ for every $i\in [r]$ (see Brualdi~\cite{brualdi1969comments} and also \cite[Corollary 39.12a]{schrijver2003combinatorial} and \cite[Theorem 5.3.4]{frank2011connections}). We claim that this property does not immediately imply our Theorem~\ref{thm:matching}. 

In our theorem, the base $A$ is assumed to be optimal. This implies the condition $a\succ b$ for all pairs $ab\in A\times B$ with $A-a+b\in \cI$ but not for those with $B+a-b\in \cI$. To see this, consider the graphic matroid of $K_4$ with ground set $S=\{e_1, e_2, e_3, e_4, e_5, e_6\}$ shown in Figure~\ref{ex:graphic}. Suppose that the total order $\succ$ is defined as $e_1\succ e_2\succ e_3\succ e_4\succ e_5\succ e_6$. Then, $A\coloneqq\{e_1, e_2, e_4\}$ is the optimal base with respect to $\succ$ and its complement $B\coloneqq S\setminus A$ is also a base. Here, we have $B+e_4-e_3\in \cI$ for $e_4\in A$ and $e_3\in B$ while $e_4\not\succ e_3$. 

Therefore, the existence of a perfect matching of exchangeability edges combined with the optimality of $A$ does not simply imply Theorem~\ref{thm:matching}.
\begin{figure}[bhtp]\label{ex:graphic}
\centering
\begin{tikzpicture}[xscale=0.9, yscale=.9]
      \tikzset{enclosed/.style={draw, circle, inner sep=0pt, minimum size=.15cm, fill=black}}
      \node[enclosed] (T) at (3,6) {};
      \node[enclosed] (L) at (1,2.8) {};
      \node[enclosed] (R) at (5,2.8) {};
      \node[enclosed] (C) at (3,4) {};
      \draw[line width=0.5mm] (T) -- (L) node[midway, left] (edge2) {$e_1$};
      \draw (T) -- (C) node[pos=0.7, left] (edge2) {$~~e_3\!$};
      \draw (T) -- (R) node[midway, right] (edge2) {$e_5$};
      \draw[line width=0.5mm] (C) -- (L) node[pos=0.3, below] (edge2) {$e_2$};
      \draw[line width=0.5mm] (C) -- (R) node[pos=0.4, below] (edge2) {$e_4$};
      \draw (L) -- (R) node[midway, below] (edge2) {$e_6$};
\end{tikzpicture}
\caption{The graphic matroid defined by $K_4$. When $e_1\succ e_2 \succ e_3\succ e_4\succ e_5\succ e_6$,  the thick edges form the optimal base with respect to $\succ$.}
\label{fig1}
\end{figure}

\end{remark}

\section{Matroid kernel algorithm with interval orders}\label{sec:kernel}

In this section, we show Theorem~\ref{thm:main-approx}, which states that \kernel\ is approximable with a factor $1.5$ if the partial orders are interval orders, i.e., $(2+2)$-free posets. 

Like the previous algorithms by Yokoi \cite{yokoi2021approximation} and by the present authors \cite{csaji2022approximation} (conference version of this paper), our algorithm is described as an application of the Gale--Shapley algorithm to a carefully constructed modified instance. In this modified instance, each element is replicated into three parallel elements, and special total orders are defined on the extended ground set. The origin of the idea comes from Király's $1.5$-approximation algorithm \cite{Kiraly11} for \maxsmti, which is a variant of the Gale--Shapley algorithm in which each man can propose to each woman at most three times and there are special rules for men's proposal order and women's acceptance/rejection. 

Our construction of the modified instance is symmetric for the two sides, and the $1.5$-approximation ratio depends only on the stability in the modified instance (not on the behaviour of the GS algorithm). This allows for a simple analysis while considering broad class preferences and constraints.

\subsection{Description of the algorithm}\label{sec:construction}
Let $M_1=(S,\cI_1, \succ_1)$ and $M_2=(S,\cI_2, \succ_2)$ be partially ordered matroids on the same ground set $S$ and suppose that $\succ_1$ and $\succ_2$ are interval orders. 
Our algorithm consists of three steps. 
\begin{enumerate}
\item The algorithm first creates a new instance by replacing each element of $S$ by three parallel elements, and by defining total orders on the extended ground set as explained below.
\item For the obtained totally ordered matroids $M^*_1$, $M^*_2$, an $(M^*_1,M^*_2)$-kernel $A^*$ is computed. 
\item The algorithm returns a set $A$ that is the projection of $A^*$ to the original ground set $S$.
\end{enumerate}
The first step can be done in $\mathcal{O}(|S|^2)$ time as explained later, and the second step can be done in $\mathcal{O}(|S|^2)$ time using Fleiner's algorithm, which we will briefly describe later.

Here, we explain the precise construction of the totally ordered matroids $M^*_1$ and $M^*_2$. Let the extended ground set be 
$S^*\coloneqq\cup_{u \in S} \{x_u,y_u,z_u\}$. We define $M^*_i=(S^*,\cI^*_i,\succ^*_i)$ for $i=1,2$ as follows. The elements $x_u,y_u,z_u$ are parallel in each $M^*_i$, that is,
\[\cI^*_i=\{\,I^* \subseteq S^*: \pi(I^*)\in \cI_i,\ \ |I^*\cap\{x_u,y_u,z_u\}|\leq 1\ (\forall u \in S)\,\},\]
where $\pi(I^*)=\{\,u\in S: I^*\cap \{x_u,y_u,z_u\} \neq \emptyset\,\}$. It is easy to see that $(S^*, \cI^*_i)$ is a matroid for $i=1,2$. To define the total order $\succ^*_1$ on the extended ground set $S^*$, we first define 
a binary relation $\mathcal{R}^*_1\subseteq S^*\times S^*$ using $\succ_1$ as follows:
\begin{align*}
\mathcal{R}_1^*=\quad&~\{\, (x_u, z_w) : u, w\in S\,\}\\
\cup &~\{\, (y_u, x_w), (y_u,y_w), (y_u, z_w) : \,u, w\in S, \,\,u\succ_1 w\,\}\\
\cup &~\{\, (x_w, y_u) : \,u,w\in S, \,u\not\succ_1 w\,\}.    
\end{align*}
To define $\succ^*_2$, we define $\mathcal{R}^*_2$ from $\succ_2$ as follows, which is described in the same manner as $\mathcal{R}^*_1$ but the roles of $x_u$ and $z_u$ are interchanged.
\begin{align*}
\mathcal{R}_2^*=\quad&~\{\, (z_u, x_w) : u, w\in S\,\}\\
\cup &~\{\, (y_u, z_w), (y_u,y_w), (y_u, x_w) : \,u, w\in S, \,\,u\succ_2 w\,\}\\
\cup &~\{\, (z_w, y_u) : \,u,w\in S, \,u\not\succ_2 w\,\}.    
\end{align*}

For each $i=1,2$, we let $\succ^*_i$ be an arbitrary total order on $S^*$ such that $s\succ^*_i t$ holds for any $(s,t)\in \mathcal{R}^*_i$. The existence of such a total order is guaranteed by the following lemma. 
\begin{lemma}\label{lem:acyclic}
Let $D_i$ be a digraph with vertex set $S^*$ such that $D_i$ has an arc $(s,t)\in S^*\times S^*$ if and only if $(s, t)\in \mathcal{R}^*_i$. Then $D_i$ is acyclic.
\end{lemma}
The proof of this lemma, which uses the fact that $\succ_i$ is an interval order, is postponed to the end of this section. We now complete the construction of the new instance.

Since the digraph $D_i$ is acyclic and has $\mathcal{O}(|S|^2)$ arcs, we can define a total order $\succ_i^*$ on $S^*$ consistent to $D_i$ by topological sort in $\mathcal{O}(|S|^2)$ time. This $\succ_i^*$ indeed satisfies the required condition, i.e., $s\succ^*_i t$ holds for any $(s,t)\in \mathcal{R}^*_i$.
This completes the construction of the totally ordered matroids $M^*_1$ and $M^*_2$.

For completeness, here we describe how Fleiner's algorithm works for our new instance $M_1^*$ and $M^*_2$.
The algorithm first sets $R$ to be an empty set and repeats the following three steps: set $X$ to be the optimal base of $M^*_1$ with the ground set restricted to $S^*\setminus R$,
set $Y$ to be the optimal base of $M^*_2$ with the ground set restricted to $X$,
and update $R$ with $R\cup (X\setminus Y)$. The repetition stops if $Y=X$, and $Y$ is returned, which is an $(M^*_1,M^*_2)$-kernel (see \cite{aziz2022matching} for this version of the description). 

Here, we show that the output of the algorithm is indeed a matroid kernel (i.e., a stable common independent set) in the original instance. The approximation ratio is shown in the next section.
We use $C^1_I(u)$ and $C^2_I(u)$ to denote fundamental circuits in $M_1$ and $M_2$, respectively.
\begin{lemma}\label{lem:stability}
The output of our algorithm is 
an $(M_1, M_2)$-kernel.
\end{lemma}
\begin{proof}
Let $A=\pi(A^*)$ be the output of the algorithm, where $A^*$ is the $(M^*_1,M^*_2)$-kernel given by Fleiner's algorithm. Since $A^*\in \cI^*_1\cap \cI^*_2$, it is clear from the definitions of $\cI^*_i$ that $A \in \cI_1\cap \cI_2$. Suppose for contradiction that there exists $u\in S\setminus A$ that blocks $A$; we claim that $y_u$ blocks $A^*$. As $u$ blocks $A$, for each $i=1,2$, we have $A+u \in \cI_i$ or $u \succ_i  v$ holds for some $v \in C^i_A(u)$. In the former case, $A^*+y_u \in \cI^*_i$ immediately follows. In the latter case, $u \succ_i  v$ implies that $v^*\coloneqq\{x_v, y_v, z_v\}\cap A^*$ satisfies $(y_u, v^*)\in \mathcal{R}^*_i$, and therefore $y_u\succ^*_i v^*$ by the construction of $\succ^*_i$, while $v^*$ belongs to the fundamental circuit of $y_u$ for $A^*$ in $M^*_i$.
Thus, $y_u$ blocks $A^*$.
\end{proof}

We now provide the postponed proof.

\begin{proof}[Proof of Lemma~\ref{lem:acyclic}]
Since the constructions of $\mathcal{R}^*_1$ and $\mathcal{R}^*_2$ are symmetric, it is sufficient to show that the digraph $D_1$ defined from $\mathcal{R}^*_1$ is acyclic.

Suppose, to the contrary, that $D_1$ has directed cycles. Let $C$ be a cycle of minimum length.
By the definition of $\mathcal{R}^*_1$, we can observe the following properties. We say that an element $s\in S^*$ is an $x$-element (resp., $y$-element, $z$-element) if $s=x_u$ (resp., $s=y_u$, $s=z_u$) for some $u\in S$.
\begin{itemize}
\item Every $z$-element has no leaving arcs. So, $C$ consists of only $x$-elements and $y$-elements.

\item No arc connects two $x$-elements. So in $C$ there are no consecutive $x$-elements.

\item There are no consecutive $y$-elements in $C$. Indeed, if $y_{u_1}, y_{u_2}, x_{u_3}$ (resp., $y_{u_1}, y_{u_2}, y_{u_3}$) appear in $C$ in this order consecutively, then $D_1$ has arcs $(y_{u_1}, y_{u_2})$ and $(y_{u_2}, y_{u_3})$ (resp., $(y_{u_2}, x_{u_3})$), and hence $u_1\succ_1 u_2$ and $u_2\succ_1 u_3$. Then, the transitivity of $\succ_1$ implies $u_1\succ_1 u_3$, and hence $D_1$ has an arc $(y_{u_1}, x_{u_3})$ (resp, $(y_{u_1}, y_{u_3})$), contradicting the minimality of $C$.

\item Therefore, $C$ has an even length and $x$-elements and $y$-elements appear alternately. 

\item Suppose that four distinct elements $y_{u_1}, x_{u_2}, y_{u_3}, x_{u_4}$ appear in $C$ in this order consecutively. 
Then we have $u_1\neq u_3$, $u_2\neq u_4$, $u_1\succ u_2$, and $u_3\succ u_4$. 

\begin{itemize}
\item If $u_1\neq u_4$ and $u_2\neq u_3$, then all $u_i$ are distinct. Since $\succ_1$ is an interval order (i.e., $(2+2)$-free), then the relations $u_1\succ u_2$ and $u_3\succ u_4$ imply that $u_1\succ u_4$ or $u_3\succ u_2$. Hence there exists an arc $(y_{u_1}, x_{u_4})$ or $(y_{u_3}, x_{u_2})$. 

\item If $u_1=u_4$ (resp., $u_2=u_3$), then $u_3\succ u_4=u_1\succ u_2$ (resp., $u_1\succ u_2=u_3\succ u_4$), and hence there exists an arc $(y_{u_3}, x_{u_2})$ (resp. ($y_{u_1}, x_{u_4}$)).
\end{itemize}
The existence of an arc $(y_{u_1}, x_{u_4})$ or $(y_{u_3}, x_{u_2})$ contradicts the minimality of $C$.
\end{itemize}
Then $C$ must be of length two, but it is impossible by the definition of $\mathcal{R}^*_1$. 
\end{proof}

\subsection{Proof of $1.5$-approximation}\label{sec:apxproof}

We now show the approximation ratio of our algorithm.
As in the previous section, we denote by $C^1_I(u)$ and $C^2_I(u)$ fundamental circuits in $M_1$ and $M_2$, respectively.
\begin{theorem}
\label{thm:approx-for-any-matroid}
	The approximation ratio of the above algorithm is at most $1.5$.
\end{theorem}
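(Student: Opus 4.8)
The plan is to bound the size of the optimal $\Delta$-min-stable set $\mathsf{OPT}$ against the size of our output $A=\pi(A^*)$ by constructing an injective-type charging argument from $\mathsf{OPT}$ into $A$, using Theorem~\ref{thm:matching} to produce the exchange structure that makes the charging valid. Concretely, let $O$ be a maximum $\Delta$-min-stable common independent set; I want to show $|O| \le \tfrac{3}{2}|A|$. The classical $1.5$-approximation proofs for \textsc{Max-SMTI} work by showing that in the symmetric difference of the two matchings there is no augmenting path of length $5$ or more (equivalently, no short alternating path along which $\mathsf{OPT}$ beats the algorithm's solution at both ends), so every ``long'' component is impossible and a counting argument yields the $3/2$ bound. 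I expect the matroid version to follow the same skeleton, but with the role of alternating paths replaced by the exchange pairs guaranteed by Theorem~\ref{thm:matching}.

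First I would lift everything to the extended ground set: since $A^*$ is an $(M^*_1,M^*_2)$-kernel and hence a true (tie-free) kernel of the ordered matroids $M^*_1,M^*_2$, Fleiner's theory tells me $A^*$ has the same span in both $M^*_i$ as any other kernel, and in particular the same cardinality as the algorithm would produce on any tie-breaking. I would then pick a maximum $\Delta$-min-stable set $O$ in the original instance and lift it to a common independent set $O^*\subseteq S^*$ of $M^*_1,M^*_2$ by choosing, for each $u\in O$, an appropriate one of $x_u,y_u,z_u$ (the choice dictated by which matroid's exchange is ``tight''). The three parallel copies are engineered precisely so that the blocking condition for $u$ in the original instance with threshold $\Delta$ translates into $y_u$ blocking $A^*$ in the tie-broken instance; this is exactly the content of Lemma~\ref{lem:stability}, and I would reuse that correspondence in reverse to control how elements of $O^*$ relate to $A^*$.

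The heart of the argument is the exchange/matching step. Consider a matroid $M^*_i$ and the two bases obtained by restricting $A^*$ and $O^*$ to their common span (after deleting the shared elements so the two bases are disjoint, as required by the hypothesis of Theorem~\ref{thm:matching}). Applying Theorem~\ref{thm:matching} with $A^*$ playing the role of the optimal base $A$ yields a perfect matching $a_ib_i$ with $a_i\succ^*_i b_i$ and $A^*+a_i-b_i\in\cI^*_i$. This matching is the matroid analogue of the alternating-path endpoints in the graph case: it says every element of $O^*\setminus A^*$ is exchangeable with a strictly better element of $A^*$. I would run this in both matroids to get two matchings, and combine them with the stability of $A^*$ (no blocking element) and the $\Delta$-min-stability of $O$ to rule out the matroid analogue of a long augmenting component. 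The key inequality to extract is that each element of $O$ not in $A$ can be ``charged'' to elements of $A$ in such a way that no element of $A$ receives more than (roughly) $3/2$ units of charge; the factor $3$ in the number of parallel copies is what converts the exchange structure into this charging ratio.

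The main obstacle I anticipate is the combinatorial bookkeeping of the charging in the matroid setting, where the clean notion of a connected alternating path component does not exist. In the graph (\textsc{SMTI}) proof one argues locally along a path; here I must instead argue globally via the two perfect matchings from Theorem~\ref{thm:matching} and show that the bipartite ``exchange digraph'' they induce between $O^*$ and $A^*$ cannot contain the forbidden configuration (the analogue of an $O$-augmenting path of length $\ge 5$) without producing either a blocking element for $A^*$ or contradicting the $\Delta$-min-stability of $O$. Making this forbidden-configuration argument precise — in particular, verifying that the three parallel copies with their carefully offset values $p^*_i$ correctly encode the $\Delta$-improvement in \emph{both} matroids simultaneously — is where the real work lies; once the forbidden long component is excluded, the $3/2$ bound follows from a standard degree/counting argument over the matching edges.
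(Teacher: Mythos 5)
Your high-level ingredients are the right ones (apply Theorem~\ref{thm:matching} once per matroid with the algorithm's output on the ``optimal base'' side, exploit the kernel property of $A^*$ and the $\Delta$-min-stability of the optimum, and let the three parallel copies encode the $\Delta$-improvements), but the two steps that actually carry the proof are missing. First, Theorem~\ref{thm:matching} needs two \emph{disjoint bases of the same matroid}, and ``restricting $A^*$ and $O^*$ to their common span'' does not produce them: under the contradiction hypothesis $|O|>1.5|A|$ the two sets have different sizes, so no restriction to a span can make both of them bases. The paper's fix is to choose, for each $i$, a set $B_i\subseteq O\setminus A$ with $A\cup B_i\in\cI_i$ and $|A\cup B_i|=|O|$, then \emph{delete} $S\setminus(A\cup O)$, \emph{contract} $(A\cap O)\cup B_i$, and \emph{truncate} to $|A\setminus O|$, so that $A\setminus O$ and $O\setminus(A\cup B_i)$ become complementary bases; one must also observe that $B_1$ and $B_2$ are disjoint because $A^*$ is an inclusionwise maximal common independent set. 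Without the contraction of $B_i$ and this disjointness, the counting at the end has no way to get off the ground. (The kernel property of $A^*$ is what certifies that $A\setminus O$ is the \emph{optimal} base of the auxiliary ordered matroid, which is the hypothesis of Theorem~\ref{thm:matching}; your proposal uses the kernel property only informally.)

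Second, your plan to ``rule out the matroid analogue of a long augmenting component'' and then do a charging argument is not how the conclusion is reached, and you explicitly defer the part where the work lies. The paper's endgame is a one-line pigeonhole: the two matchings $N_1,N_2$ produced by Theorem~\ref{thm:matching} have sizes $|B_2|,|B_1|>|A\setminus O|/2$, so some $u\in A\setminus O$ is covered by both. The types then do the rest: being covered by $N_1$ forces $u^*\in\{x_u,y_u\}$ and covered by $N_2$ forces $u^*\in\{y_u,z_u\}$, so $u$ is of type $y$, which by the offset $+\Delta$ in the construction yields $p_i(u)\ge p_i(v_i)+\Delta$ against both partners $v_1,v_2\in O$; combined with the exchangeability $v_i\in C^i_O(u)$ or $O+u\in\cI_i$, this exhibits $u$ as a $\Delta$-min-blocking element for $O$ --- contradicting the stability of $O$, not of $A^*$. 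Your write-up never identifies this mechanism (in particular that the final contradiction comes from $O$'s stability via an element of $A$), so as it stands the proposal is a plausible outline with the decisive lemma and the decisive counting step both unproved.
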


\begin{proof}
 Let $A=\pi(A^*)$ be the output of the algorithm, where $A^*$ is an $(M^*_1,M^*_2)$-kernel, and  let $B$ be a largest $(M_1,M_2)$-kernel.  
 Suppose for contradiction that $|B|>1.5 |A|$. Let $B_i$ be a subset of $B\setminus A$ such that $A \cup B_i \in \mathcal{I}_i$ and $|A \cup B_i|=|B|$ for each $i\in \{1,2\}$. (The existence of such $B_i$ follows from axiom (ii) of matroids.) 
 The sets $B_1$ and $B_2$ are disjoint because $A^*$ is an inclusion-wise maximal common independent set of $M^*_1$ and $M^*_2$. In the following, we say that an element $u\in A$ is \emph{of type} $x$ (resp., $y,z$) if $\{x_u,y_u,z_u\}\cap A^*=x_u$ (resp., $y_u,z_u$).
    \begin{lemma}\label{lem:matchings}
    Let $i\in \{1,2\}$. There is a matching $N_i$ of size $|B_{3-i}|$ between $A\setminus B$ and $B_{3-i}$  such that the following hold for every $uv \in N_i$, where $u\in A$ and $v\in B$: 
	\begin{enumerate}
		\item $u$ is of type $x$ or $y$ if $i=1$, and of type $y$ or $z$ if $i=2$
		\item 
  $u\succ_i v$ if $u$ is of type $y$
		\item either $v \in C^i_B(u)$ or $B+u \in \mathcal{I}_i$. 
	\end{enumerate}
    \end{lemma}
\begin{proof}
Let $M'=(S',\cI')$ be the matroid obtained from $M_i$ by deleting $S\setminus (A \cup B)$, contracting $(A\cap B) \cup B_i$, and truncating to the size of $A\setminus B$. That is, $S'=(A\setminus B)\cup (B\setminus (A\cup B_i))$ and $\cI'=\{\, I\subseteq S': I\subseteq A\cup B,~ I\cup (A\cap B)\cup B_i \in \cI_i,~ |I|\leq |A\setminus B|\,\}$. In $M'$, the sets $A'\coloneqq A\setminus B$ and $B'\coloneqq B \setminus (A \cup B_i)$ are bases that are complements of each other. 

We define a total order $\succ'$ on $S'$ as follows. The elements of $B \setminus (A \cup B_i \cup B_{3-i})$ are worst (in arbitrary order). On the remaining elements, i.e., on the elements of $(A\setminus B)\cup B_{3-i}$, we define the preferences based on the total order $\succ^*_i$ on $S^*$. To do this, we assign an element $u^*\in S^*$ to each $u \in (A\setminus B)\cup B_{3-i}$ as follows. Let $u^*=\{x_u,y_u,z_u\} \cap A^*$ if $u \in A\setminus B$, let $u^*=x_u$ if $i=1$ and $u\in B_2$, and let $u^*=z_u$ if $i=2$ and $u\in B_1$. We then let $u \succ' v$ if and only if $u^* \succ^*_i v^*$. In the totally ordered matroid $M'=(S',\cI',\succ'$), $A'$ is an optimal base. Indeed, $v$ is the worst element of $C_{A'}(v)$ for every $v\in B'$. It is clear for the elements in $B'\setminus B_{3-i}$ by the definition of $\succ'$. As for each $v\in B_{3-i}$, since $A^*+v^*\in \mathcal{I}^*_{3-i}$ holds and $A^*$ is an $(M^*_1,M^*_2)$-kernel, $v^*$ must be the worst element of its fundamental circuit for $A^*$. By Theorem \ref{thm:matching}, there is a perfect matching $N'$ between $A'$ and $B'$ such that $u \succ' v$ and $v \in C'_{B'}(u)$ for every $uv \in N'$, where $u \in A'$ and $v\in B'$. 

Let $N_i$ be the subset of $N'$ induced by $A \cup B_{3-i}$. Then $|N_i|=|B_{3-i}|$, and every $uv \in N_i$ satisfies $u \succ' v$, which implies $u^* \succ^* v^*$, which in turn implies $(v^*, u^*)\not\in \mathcal{R}_i^*$. In case $i=1$, as $v\in B_{3-i}=B_2$, we have $v^*=x_v$, and hence $(x_v,u^*)\not\in \mathcal{R}^*_1$. By the definition of $\mathcal{R}^*_1$, this implies the first two properties of the lemma. We can similarly obtain these two in the case $i=2$.

Now we show that for every $uv \in N_i$, either $v \in C^i_B(u)$ or $B+u \in \mathcal{I}_i$. Since $v \in C'_{B'}(u)$, $v$ is in the fundamental circuit of $u$ for $B'$ in the matroid obtained by truncating $M_i$ to the size of $A\setminus B$. This means that it is either in the fundamental circuit also in $M_i$, or $B+u$ is independent in $M_i$, as required.
	\end{proof}

We are now ready to prove the theorem by obtaining a contradiction. 
Let $N_1$ and $N_2$ be matchings described in Lemma~\ref{lem:matchings}. 
Since $|B|>1.5|A|$ implies $|N_i|=|B_{3-i}|>|A\setminus B|/2$ for $i\in \{1,2\}$, there is an element $u \in A\setminus B$ that is covered by both $N_1$ and $N_2$. Let $uv_1 \in N_1$ and $uv_2 \in N_2$. Since the first two properties of Lemma~\ref{lem:matchings} hold for $i\in \{1,2\}$, the element $u$ must be of type $y$, and $u \succ_i v_i$ for $i\in \{1,2\}$. But this means that $u$ blocks $B$ because of the third property of Lemma~\ref{lem:matchings}, a contradiction. 
\end{proof}

Thus, we have completed the proof of the $1.5$-approximability of \kernel\ with interval orders.

We conclude this section by observing that Proposition~\ref{prop:hard}, i.e., the UGC-hardness of $(1.5-\delta)$-approximation, follows from the result of Askalidis et al. \cite{askalidis2013socially} on the {\em maximum stable marriage problem with incomplete lists under social stability} (\textsc{max-smiss}). An input of the problem is a graph $G=(U,W;E)$, a total order $\succ_v$ on $\delta_G(v)$ for each $v\in U\cup W$, and a set $A\subseteq E$ whose elements are called {\em acquainted pairs}. For a matching $M\subseteq E$, an edge $e\in E$ {\em socially blocks} $M$ if it blocks $M$ in the classical sense and belongs to $A$. A matching is {\em socially stable} if there is no social blocking pair. The task of \textsc{max-smiss} is to find a maximum socially stable matching. Askalidis et al. \cite{askalidis2013socially} proved that, assuming UGC, it is NP-hard to approximate \textsc{max-smiss} within $\frac{3}{2}-\delta$ for any $\delta>0$.

We can reduce \textsc{max-smiss} to the stable marriage problem with interval orders as follows. 
Given an instance of \textsc{max-smiss}, from the total order $\succ_v$ of each $v\in U\cup W$, define a partial order $\succ'_v$ by $e\succ'_v f\Leftrightarrow e\succ_v f, e\in A$. 
Then $\succ'_v$ is an interval order. Indeed, we can easily observe that, if $e\succ'_v f$ and $g\succ'_v h$ hold for distinct $e,f,g,h$, then 
$e\succ'_v h$ or $g\succ'_v f$.
It is also not hard to see that a socially stable matching in the original instance is a stable matching with respect to the interval orders $\{\succ'_v\}_{v\in U\cup W}$. Thus, the maximum stable marriage with interval orders and its generalization to \kernel\ are inapproximable within $\frac{3}{2}-\delta$ under UGC.

\section{Inapproximability with General Partial Orders}\label{sec:inapprox}
In this section, we consider the case where preferences may be arbitrary partial orders. 
In particular, we investigate \maxsmp, the maximum stable marriage problem with partial orders.
Interestingly, in this case the structure of the problem changes significantly compared to the case with weak orders (i.e., \maxsmti), as shown in Section~\ref{sec:structure}. In fact, beating the trivial $2$-approximation becomes UGC-hard, 
as shown in Section~\ref{sec:UGChardness}. 

\subsection{Structural observations}\label{sec:structure}
Here we provide some structural observations on \maxsmp.
The first one shows that the main tool which is used to show the $1.5$-approximability in previous approaches for \maxsmti\ 
cannot be used for partial order preferences.
The previous $1.5$-approximation algorithms for \maxsmti, such as one in \cite{Kiraly13}, are designed so that, for the output stable matching $M$ and any stable matching $N$, there is no maximal alternating path in $M\cup N$ consisting of one $M$-edge and two $N$-edges. In the case of arbitrary partial orders, it may happen that for any stable matching $M$, there exists another stable matching $N$, such that a maximal alternating path with one $M$-edge and two $N$-edges exists, as the next example shows.

\begin{figure}[h]
    \centering
    \includegraphics[width=0.5\textwidth]{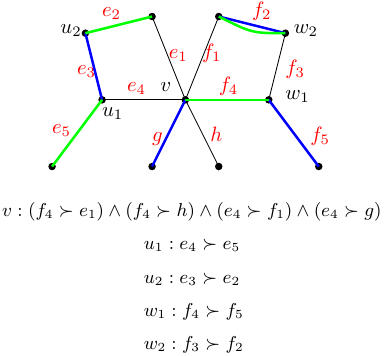}
    \caption{Illustration of Example \ref{ex:weird}. The green stable matching is $M_2$ and the blue one is $M_3$.}
    \label{fig:weird}
\end{figure}

\begin{example}
\label{ex:weird}
    Consider the instance in Figure~\ref{fig:weird}. We claim that, in this example, for any stable matching $M$, there is another stable matching $N$, such that in $M\cup N$ there is a maximal alternating path of length 3, with two $N$-edges and one $M$-edge. 
    It is easy to verify that the following matchings are stable: $M_1=\{ e_2,e_4,f_2,f_5\}, M_2=\{ f_2,f_4,e_2,e_5\}, M_3=\{ g,e_3,f_2,f_5\},M_4=\{ h,f_3,e_2,e_5\}$.

    Consider vertex $v$, which must be matched in any stable matching $M$. If $f_1\in M$ or $g\in M$ (resp., $e_1\in M$ or $h\in M$), then $e_3\in M$ (resp., $f_3\in M$) too since otherwise $e_4$ (resp., $f_4$) blocks $M$, so $M\cup M_2$ (resp., $M\cup M_1$) contains a desired alternating path $e_2 e_3 e_5$ (resp., $f_2 f_3 f_5$). If $f_4\in M$ (resp., $e_4\in M$), then $M\cup M_3$ (resp., $M\cup M_4$) contains a desired alternating path $gf_4f_5$ (resp., $he_4e_5$). 
\end{example}

The second observation is about the integrality gap. Observe that
\maxsmp\ on $G=(U, W;E)$ can be modeled with the following integer programming problem (IP) with variables $\{p_e\}_{e\in E}$.
Since we allow parallel edges, to distinguish edges connecting the same vertices, say $u$ and $w$, we use a notation $(u,w)_i$ where distinct edges have distinct subscripts $i$: 

\begin{align}
\text{max} \quad\sum_{e\in E}~p_e \hspace{60mm}&&&\nonumber\\
\text{s.t.}\hspace{60mm} \sum_{e\in \delta_G(v)}p_e&\le 1 & &  \forall v\in U\cup W \label{cond1}\\
\sum_{e'\in \delta_G(u):e\not{\succ_u}e'}\hspace{-2mm}p_{e'}+\sum_{e'\in \delta_G(w):e\not{\succ_w}e'}\hspace{-2mm}p_{e'}-\sum_{\substack{\scriptstyle e'\in \delta_G(u)\cap \delta_G(w): \\\scriptstyle e\not{\succ_u}e', e\not{\succ_w} e'}} \hspace{-2mm}p_{e'}&\ge 1 & & \forall e=(u,w)_i\in E \label{cond2}\\
 p_e\in& \{ 0,1\} & & \forall e\in E \nonumber
\end{align}

In the special case of \maxsmti, that is, the case with weak orders, it is known that the integrality gap of the linear programming (LP) relaxation of this IP is at least $1.5-o(1)$ \cite{IMY14}. This fact is sometimes considered to indicate a potential barrier to improving the approximation ratio $1.5$.

We show that this integrality gap can be $2$ for the general \maxsmp\ while it is at most $1.5$ even for interval orders. First, we give an example showing the former claim.

\begin{example}
    Consider an instance of \maxsmp\ with vertices $u,u',w,w'$ and edges $e_i=(u,w')_i$ for $i\in [2]$, $f_i=(u',w)_i$ for $i\in [2]$, and $g_j=(u,w)_j$ for $j\in [4]$. The partial orders of $u$ and $w$ are given by $(g_1\succ_u e_1)\wedge (g_2\succ_u e_1)\wedge (g_3\succ_u e_2)\wedge (g_4\succ_u e_2)$ and $(g_1\succ_w f_1)\wedge (g_2\succ_w f_2)\wedge (g_3\succ_w f_1)\wedge (g_4\succ_w f_2)$. The orders of $u'$ and $w'$ are defined arbitrarily.

    In this instance, $p_{e_1}=p_{e_2}=p_{f_1}=p_{f_2}=0.5, \ p_{g_j}=0~(j\in [4])$ is a solution to the LP relaxation of the above IP, but any stable matching must have size one (it must contain one of the $g_j$ edges). Thus, the integrality gap is $2$.
\end{example}

We next show that, in contrast to this example, the integrality gap is at most $1.5$ for interval orders. The proof depends on the correctness of our $1.5$-approximation algorithm shown in Section~\ref{sec:kernel}.

\begin{theorem}\label{thm:smti-int-gap}
For a \maxsmp\ instance consisting of $G=(U,W;E)$ and $\{\succ_v\}_{v\in U\cup W}$, if $\succ_v$ are interval orders, then the output of our algorithm $1.5$-approximates the LP optimum of the relaxation of the above IP. In particular,
the integrality gap of the IP is at most $1.5$.
\end{theorem}
\begin{proof}
Since \maxsmp\ with interval orders is a special case of \kernel\ with interval orders (recall the remark just after Theorem~\ref{thm:main-approx}), we can apply our algorithm in Section~\ref{sec:kernel} to the given instance.
Let $M^*\subseteq \cup_{e\in E}\{x_e, y_e, z_e\}$ be the stable matching in the corresponding modified instance and $M$ be its projection to $E$, i.e., the algorithm's output.
Let $\{p^*_e\}_{e\in E}\in [0,1]^{E}$ be an optimal solution to the LP. 
We show $\sum_{e\in E}p^*_E\leq \frac{3}{2}|M|$, which completes the proof.

For an edge $e=(u,w)_i\in M$, let $F_e\subseteq E$ be the set of edges $f$ such that one of $f$'s endpoints is $u$ or $w$ and the other is uncovered by $M$. Then, $F_e\subseteq \delta_G(u)\cup \delta_G(w)$ and $F_e\cap (\delta_G(u)\cap \delta_G(w))=\emptyset$.
We now show $\sum_{f\in F_e}p^*_f\leq 1$. We can assume $F_e\cap \delta_G(u)\neq \emptyset$ and $F_e\cap \delta_G(w)\neq \emptyset$ since otherwise the claim is trivial. 
For any $f\in F_e\cap \delta_G(u)$, its copy $x_f$ in the modified instance is incident to $u$ and the other endpoint is uncovered by $M^*$, and hence the stability of $M^*$ implies $e^* \succ^*_1 x_f
 $, where $e^*$ is the $e$'s copy in $M^*$ and $\succ^*_1$ is the $U$-side total order in the modified instance. Similarly, for any $g\in F_e\cap \delta_G(w)$, we obtain $e^*\succ^*_2 z_g$, where $\succ^*_2$ is the $W$-side total order. By the construction of $\succ^*_1, \succ^*_2$, these imply $e^*=y_e$ and that we have $e\succ_u f$ for every $f\in F_e\cap \delta_G(u)$ and $e\succ_w g$ for every $g\in F_e\cap \delta_G(w)$.
Thus, in the LP, every element in $F_e$ appears either in inequality \eqref{cond1} for $u$ or that for $w$, and does not appear in the inequality \eqref{cond2} for $e$. By adding the inequalities \eqref{cond1} for $u$ and $w$ and subtracting \eqref{cond2} for $e$ from it, we obtain $\sum_{f\in F_e}p^*_f\leq 1$ as required.

Let $\partial M$ denote the set of vertices in $U\cup W$ covered by $M$. Observe that $F\coloneqq \cup_{e\in M} F_e$ coincides with the set of edges connecting $\partial M$ and $(U\cup W)\setminus \partial M$, and that $\{F_e\}_{e\in M}$ forms a partition of $F$. Let $\Delta\coloneqq \sum_{f\in F}p^*_f=\sum_{e\in M}\sum_{f\in F_e}p^*_f\leq |M|$, where the last inequality follows from the fact $\sum_{f\in F_e}p^*_f\leq 1$ shown above.

Note that, as $M$ is stable, it is maximal, and hence every edge in $E$ has at least one endpoint in $\partial M$. Therefore, every $e\in E\setminus F$ has both endpoints in $\partial M$. 
Then, by summing the inequalities \eqref{cond1} for all vertices in $\partial M$, we obtain \,$2\sum_{e\in E\setminus F}p^*_e+\sum_{f\in F}p^*_f\leq |\partial M|=2|M|$, which implies $\sum_{e\in E\setminus F}p^*_e\leq |M|-\frac{1}{2}\Delta$. 
Combined with $\Delta=\sum_{f\in F}p^*_f\leq |M|$, this implies $\sum_{e\in E}p^*_e\leq |M|-\frac{1}{2}\Delta+\Delta=|M|+\frac{1}{2}\Delta\leq \frac{3}{2}|M|$.
\end{proof}

\subsection{UGC-hardness of $(2-\delta)$-approximation}\label{sec:UGChardness}
In this section, we provide our hardness reduction to prove Theorem~\ref{thm:maxsmp}, i.e., we show that it is UGC-hard to beat the trivial $2$-approximation for \maxsmp\ with general partial orders. We use the following theorem of Bansal and Khot about \is.

\begin{theorem}[Bansal and Khot \protect\cite{bansal2009optimal}]
\label{thm:UGC}
    Assuming UGC, for any $\varepsilon >0$ it is NP-hard, given an $n$-vertex graph that has two disjoint independent sets of size $(\frac{1}{2}-\varepsilon )n$ each, to find an independent set of size $\varepsilon n$.
\end{theorem}

\begin{proof}[Proof of Theorem~\ref{thm:maxsmp}]
We show that a $(2-\delta)$-approximation algorithm for \maxsmp\ implies that we can find an independent set of size at least $\varepsilon n$ in an instance of Theorem~\ref{thm:UGC}, if $\varepsilon$ is small enough, which is a contradiction. 

Let $G=(V,E)$ be an instance of \is, and let $V=(v_1,v_2,\dots,v_n)$. 
We create an instance $(G'=(U,W;E'), \{\succ_v\}_{v\in U\cup W})$ of \maxsmp\ as follows. 
For clarity, we refer to the elements of $V$ and $U\cup W$ as ``vertices'' and ``agents,'' respectively.
For each vertex $v_i\in V$, we create four agents $u_i, y_i\in U$ and $w_i, x_i\in W$ and create six edges $g_{i(\ell)}=(u_i,w_i)_{\ell}$ for $\ell\in[2]$, $h_{i(\ell)}=(u_i,x_i)_{\ell}$ for $\ell\in [2]$, and $h'_{i(\ell)}=(y_i,w_i)_{\ell}$ for $\ell\in [2]$. Then, for each $(v_i,v_j)\in E$ we create four edges $e_{ij}=(u_i,w_j)_1, f_{ij}=(u_i,w_j)_2, e_{ji}=(u_j,w_i)_1,f_{ji}=(u_j,w_i)_2$. This completes the construction of the bipartite graph $G'$. (See Figure~\ref{fig:constr} for an example.)

We describe the relations in the partial orders of the agents in Table~\ref{tab:my_label}.

\begin{table}[h]
\label{tab:POprefs}
    \centering
    \begin{tabular}{cc}
       $u_i:$ & $(g_{i(1)}\succ [F(u_i)]\succ h_{i(1)})$ $\wedge$ $(g_{i(2)}\succ [E(u_i)]\succ h_{i(2)})$ \\
       $w_i:$ & $(g_{i(2)}\succ [F(w_i)]\succ h'_{i(1)})$ $\wedge$ $(g_{i(1)}\succ [E(w_i)]\succ h'_{i(2)})$
    \end{tabular}
    \smallskip
    \caption{Partial orders used in the proof of Theorem~\ref{thm:maxsmp}.}
    \label{tab:my_label}
\end{table}

Here, $[X(v_i)]$ for $v_i\in \{ u_i,w_i\}$ and $X\in \{ E,F\}$ denotes a strict ranking over the adjacent edges to $v_i$ of type $X$ according to the indices of the other endpoint (the smaller index is ranked higher). 

This concludes the construction part. The construction is illustrated in Figure \ref{fig:constr}

\begin{figure}[h]
    \centering
    \includegraphics[scale=1]{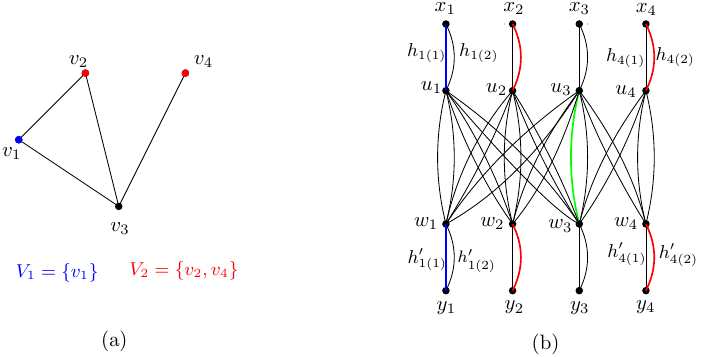}
    \caption{The reduction for Theorem \ref{thm:maxsmp}. (a) An instance of \is\ with two disjoint independent sets $V_1$ (blue) and $V_2$ (red). (b) The corresponding instance of \maxsmp. Parallel edges $h_{i(\ell)}$ and $h'_{i(\ell)}$ with $\ell=1$ are drawn on the left and those with $\ell=2$ are drawn on the right. Blue and red edges represent the edges corresponding to the two independent sets $V_1,V_2$, respectively, which are chosen in the constructed matching.}
    \label{fig:constr}
\end{figure}

\begin{claim}\label{clm:reduction1}
    If there are disjoint independent subsets $V_1,V_2\subset V$ with $|V_1|+|V_2|=k$ in $G$, then there is a stable matching of size $n+k$ in $G'$.
\end{claim}
\begin{proof}
    Let $V_1,V_2$ be two disjoint independent sets in $G$. We create a matching $M$ in $G'$ as follows. For each $v_i\in V_1$, we add the edges $h_{i(1)}=(u_i,x_i)_1, h'_{i(1)}=(y_i,w_i)_1$. For each $v_i\in V_2$, we add the edges $h_{i(2)}=(u_i,x_i)_2, h'_{i(2)}=(y_i,w_i)_2$. Finally, for $v_i\in V\setminus (V_1\cup V_2)$, we add an edge $g_{i(1)}=(u_i,w_i)_1$. Clearly, $M$ has size $2k+(n-k)=n+k$.

    We claim that $M$ is stable. First, observe that each $u_i,w_i$ agent is matched in $M$. No $g_{i(\ell)}$ type edge can block $M$ because such a blocking edge would imply that there is an index $i\in [n]$ such that $h_{i(\ell)}, h'_{i(3-\ell)}\in M$ for some $\ell \in [2]$, which contradicts the construction of $M$. No $e_{ij}$ or $f_{ij}$ type edge can block $M$ either, because $e_{ij}$ (resp., $f_{ij}$) could block $M$ only if $h_{i(\ell)}, h'_{j(\ell)}\in M$ for some $\ell \in [2]$, however, such indices $i,j$ must satisfy $v_i,v_j\in V_{\ell}$ by the construction, and as $V_1$ and $V_2$ were independent sets, such an edge $e_{ij}$ (resp., $f_{ij}$) could not exist in the first place. Any $h_{i(\ell)}$ (resp., $h'_{i(\ell)}$) type edge cannot block $M$ either, as its endpoint $u_i$ (resp., $w_i$) is covered by a better or indifferent edge in $M$. 
\end{proof}

\begin{claim}
\label{claim:2}
    If there is a stable matching of size $n+k$ in $G'$, then there are disjoint independent sets $V_1,V_2\subset V$ in $G$ such that $|V_1|+|V_2|=k$.
\end{claim}
\begin{proof}
Let $M$ be a stable matching of size $n+k$. 

We claim that for any $i\in [n]$ and $\ell\in [2]$, if $h_{i(\ell)}\in M$, then $h'_{i(\ell)}\in M$. 
Suppose for contradiction that $h_{i_0(1)}\in M$ but $h'_{i_0(1)}\notin M$ for some $i_0\in [n]$. Since $g_{i_0(1)}=(u_{i_0},w_{i_0})_1$ does not block $M$ while $u_{i_0}$ prefers it to $h_{i_0(1)}\in M$, agent $w_{i_0}$ must be covered by some edge that is not worse than (i.e., better than or incomparable with) $g_{i_0(1)}$. As $h'_{i_0(1)}\notin M$, then $f_{i_1i_0}\in M$ must hold for some $i_1\in [n]$. Now, we know that $G'$ has an edge $f_{i_0i_1}$. Since $f_{i_0i_1}=(u_{i_0}, w_{i_1})_2$ and $g_{i_1(1)}=(u_{i_1}, w_{i_1})_1$ do not block $M$ while $u_{i_0}$ prefers $f_{i_0i_1}$ to $h_{i_0(1)}\in M$ and $u_{i_1}$ prefers $g_{i_1(1)}$ to $f_{i_1i_0}\in M$, agent $w_{i_1}$ must be covered by some edge that is not worse than $f_{i_0i_1}$ and not worse than $g_{i_1(1)}$. Thus, we obtain that $f_{i_2i_1}\in M$ for some $i_2<i_0$. Now, since $f_{i_1i_2}=(u_{i_1}, w_{i_2})_2$ and $g_{i_2(1)}=(u_{i_2}, w_{i_2})_1$ do not block $M$ while $u_{i_1}$ prefers $f_{i_1i_2}$ to $f_{i_1i_0}\in M$ (as $i_2<i_0$) and $u_{i_2}$ prefers $g_{i_2(1)}$ to $f_{i_2i_1}\in M$, we get that $w_{i_2}$ is covered by an edge no worse than $f_{i_1i_2}$ and $g_{i_2(1)}$, and hence $f_{i_3i_2}\in M$ for some $i_3<i_1$. By the same argument, from the fact that $f_{i_2i_3}=(u_{i_2},w_{i_3})_2$ and $g_{i_3(1)}=(u_{i_3},w_{i_3})_1$ do not block $M$, we get that $f_{i_4i_3}\in M$ for some $i_4<i_2<i_0$. By iterating this argument, we get that there must be infinitely many $f_{ij}$ type edges in $M$, which is a contradiction. 

By using similar arguments, we get that, for any $i\in [n]$ and $\ell\in[2]$, we have $h_{i(\ell)}\in M$ if and only if $h'_{i(\ell)}\in M$. 
Recall that the size of $M$ is $n+k$ and observe that all $u_i, w_i$ agents are covered in $M$ (since otherwise $h_{i(\ell)}$ or $h'_{i(\ell)}$ type edges block $M$).
Let $V_1=\{\, v_i\in V : h_{i(1)}\in M\,\}$ and $V_2=\{\, v_i\in V :  h_{i(2)}\in M\,\}$. Then, $V_1\cap V_2=\emptyset$ and the fact $h_{i(\ell)}\in M\Leftrightarrow h'_{i(\ell)}\in M$ implies that $|V_1|+|V_2|=k$. Suppose that there is an edge $(v_i,v_j)\in E$ with $v_i, v_j\in V_1$. Then, the edge $f_{ij}$ blocks $M$, a contradiction. Similarly, if there is an edge $(v_i,v_j)\in E$ with $v_i, v_j\in V_2$, then $e_{ij}$ blocks $M$, a contradiction. Thus, $V_1$ and $V_2$ are disjoint independent sets with $|V_1|+|V_2|=k$ as required.
\end{proof}

Suppose for contradiction that there is a polynomial-time $(2-\delta)$-approximation algorithm for \maxsmp\ for some fixed $\delta >0$.
For any $\varepsilon >0$, let $G$ be an $n$-vertex graph such that there are disjoint independent sets with size $(\frac{1}{2}-\varepsilon )n$ each.
We have shown in Claim~\ref{clm:reduction1} that there is a stable matching of size $2n-2\varepsilon n$ in the corresponding \maxsmp\ instance $G'$. 
Using a $(2-\delta)$-approximation algorithm for \maxsmp, we can find a stable matching in $G'$ of size at least $\frac{1}{2-\delta}(2-2\varepsilon )n$, where we have $\frac{1}{2-\delta}(2-2\varepsilon )n>(1+2\varepsilon)n$ if $\varepsilon$ is small enough (for example $\varepsilon <\frac{\delta}{6}$). 
Then, by Claim~\ref{claim:2},  we can obtain two disjoint independent sets $V_1$ and $V_2$ in $G$ with $|V_1|+|V_2|> 2\varepsilon n$. Then, we get that we can find an independent set of size at least $\varepsilon n$, which contradicts Theorem \ref{thm:UGC}.
\end{proof}

\section*{Acknowledgement}
Some of our results were obtained at the Emléktábla Workshop in Gárdony, July 2022. We would like to thank Tamás Fleiner, Zsuzsanna Jankó, and Ildikó Schlotter for the fruitful discussions. We thank the anonymous reviewers of the previous versions for their helpful feedback. The work was supported by the Lend\"ulet Programme of the Hungarian Academy of Sciences -- grant number LP2021-1/2021, by the Hungarian National Research, Development and Innovation Office -- NKFIH, financed under the ELTE TKP2021-NKTA-62 funding scheme and grant K143858. The first author was supported by the Ministry of Culture and Innovation of Hungary from the National Research, Development and Innovation fund, financed under the KDP-2023 funding scheme (grant number C2258525).
The last author was  supported by JST PRESTO Grant Number JPMJPR212B and JST ERATO Grant Number JPMJER2301, and the joint project of Kyoto University and Toyota Motor Corporation,titled “Advanced Mathematical Science for Mobility Society”. 

\clearpage
\bibliographystyle{plain}
\bibliography{cit}

\begin{thebibliography}{10}

\bibitem{anshelevich2013anarchy}
Elliot Anshelevich, Sanmay Das, and Yonatan Naamad.
\newblock Anarchy, stability, and utopia: creating better matchings.
\newblock {\em Autonomous Agents and Multi-Agent Systems}, 26(1):120--140,
  2013.

\bibitem{askalidis2013socially}
Georgios Askalidis, Nicole Immorlica, Augustine Kwanashie, David~F Manlove, and
  Emmanouil Pountourakis.
\newblock Socially stable matchings in the hospitals/residents problem.
\newblock In {\em Algorithms and Data Structures: 13th International Symposium,
  WADS 2013, London, ON, Canada, August 12-14, 2013. Proceedings 13}, pages
  85--96. Springer, 2013.

\bibitem{aziz2020stable}
Haris Aziz, P{\'e}ter Bir{\'o}, Serge Gaspers, Ronald de~Haan, Nicholas Mattei,
  and Baharak Rastegari.
\newblock Stable matching with uncertain linear preferences.
\newblock {\em Algorithmica}, 82:1410--1433, 2020.

\bibitem{aziz2022matching}
Haris Aziz, P{\'e}ter Bir{\'o}, and Makoto Yokoo.
\newblock Matching market design with constraints.
\newblock In {\em Proc. of 36th AAAI Conference on Artificial Intelligence
  (AAAI 2022)}, volume 36 (11), pages 12308--12316, 2022.

\bibitem{bansal2009optimal}
Nikhil Bansal and Subhash Khot.
\newblock Optimal long code test with one free bit.
\newblock In {\em Proceedings of 50th Annual IEEE Symposium on Foundations of
  Computer Science (FOCS 2009)}, pages 453--462. IEEE, 2009.

\bibitem{BFIM10}
P{\'e}ter Bir{\'o}, Tam{\'a}s Fleiner, Robert~W Irving, and David~F Manlove.
\newblock The college admissions problem with lower and common quotas.
\newblock {\em Theoretical Computer Science}, 411(34):3136--3153, 2010.

\bibitem{bredereck2020adapting}
Robert Bredereck, Jiehua Chen, Du{\v{s}}an Knop, Junjie Luo, and Rolf
  Niedermeier.
\newblock Adapting stable matchings to evolving preferences.
\newblock In {\em Proceedings of the AAAI Conference on Artificial
  Intelligence}, volume~34, pages 1830--1837, 2020.

\bibitem{brualdi1969comments}
Richard~A Brualdi.
\newblock Comments on bases in dependence structures.
\newblock {\em Bulletin of the Australian Mathematical Society}, 1(2):161--167,
  1969.

\bibitem{chen2018stable}
Jiehua Chen, Rolf Niedermeier, and Piotr Skowron.
\newblock Stable marriage with multi-modal preferences.
\newblock In {\em Proceedings of the 2018 ACM Conference on Economics and
  Computation}, pages 269--286, 2018.

\bibitem{chen2021matchings}
Jiehua Chen, Piotr Skowron, and Manuel Sorge.
\newblock Matchings under preferences: Strength of stability and tradeoffs.
\newblock {\em ACM Transactions on Economics and Computation}, 9(4):1--55,
  2021.

\bibitem{cooper20183}
Frances Cooper and David Manlove.
\newblock A 3/2-approximation algorithm for the student-project allocation
  problem.
\newblock In {\em Proc. 17th International Symposium on Experimental Algorithms
  (SEA 2018)}. Schloss Dagstuhl-Leibniz-Zentrum fuer Informatik, 2018.

\bibitem{csaji2022approximation}
Gergely Cs{\'a}ji, Tam{\'a}s Kir{\'a}ly, and Yu~Yokoi.
\newblock Approximation algorithms for matroidal and cardinal generalizations
  of stable matching.
\newblock In {\em Symposium on Simplicity in Algorithms (SOSA)}, pages
  103--113. SIAM, 2023.

\bibitem{smti1.5inapprox}
Szymon Dudycz, Pasin Manurangsi, and Jan Marcinkowski.
\newblock Tight inapproximability of minimum maximal matching on bipartite
  graphs and related problems.
\newblock In {\em Approximation and Online Algorithms: 19th International
  Workshop, WAOA 2021, Lisbon, Portugal, September 6--10, 2021, Revised
  Selected Papers}, pages 48--64. Springer, 2022.

\bibitem{fleiner2001matroid}
Tam{\'a}s Fleiner.
\newblock A matroid generalization of the stable matching polytope.
\newblock In {\em Proc. 8th International Conference on Integer Programming and
  Combinatorial Optimization}, pages 105--114. Springer, 2001.

\bibitem{fleiner2003fixed}
Tam{\'a}s Fleiner.
\newblock A fixed-point approach to stable matchings and some applications.
\newblock {\em Mathematics of Operations research}, 28(1):103--126, 2003.

\bibitem{frank2011connections}
Andr{\'a}s Frank.
\newblock {\em Connections in Combinatorial Optimization}, volume~38.
\newblock OUP Oxford, 2011.

\bibitem{GS85}
D.~Gale and M.~Sotomayor.
\newblock Some remarks on the stable matching problem.
\newblock {\em Discrete Applied Mathematics}, 11(3):223--232, 1985.

\bibitem{GS62}
David Gale and Lloyd~S Shapley.
\newblock College admissions and the stability of marriage.
\newblock {\em American Mathematical Monthly}, 69(1):9--15, 1962.

\bibitem{halldorsson2002inapproximability}
Magn{\'u}s Halld{\'o}rsson, Kazuo Iwama, Shuichi Miyazaki, and Yasufumi Morita.
\newblock Inapproximability results on stable marriage problems.
\newblock In {\em Proc. 5th Latin-American Theoretical Informatics Symposium
  (LATIN 2002)}, pages 554--568. Springer, 2002.

\bibitem{halldorsson2007improved}
Magn{\'u}s~M Halld{\'o}rsson, Kazuo Iwama, Shuichi Miyazaki, and Hiroki
  Yanagisawa.
\newblock Improved approximation results for the stable marriage problem.
\newblock {\em ACM Transactions on Algorithms (TALG)}, 3(3):No. 30, 2007.

\bibitem{CCH10}
Chien-Chung Huang.
\newblock Classified stable matching.
\newblock In {\em Proc. twenty-first annual ACM-SIAM symposium on Discrete
  Algorithms (SODA 2010)}, pages 1235--1253. SIAM, 2010.

\bibitem{huang2015tightsmti}
Chien-Chung Huang, Kazuo Iwama, Shuichi Miyazaki, and Hiroki Yanagisawa.
\newblock A tight approximation bound for the stable marriage problem with
  restricted ties.
\newblock In {\em Approximation, Randomization, and Combinatorial Optimization.
  Algorithms and Techniques (APPROX/RANDOM 2015)}. Schloss
  Dagstuhl-Leibniz-Zentrum fuer Informatik, 2015.

\bibitem{irving2003strong}
Robert~W Irving, David~F Manlove, and Sandy Scott.
\newblock Strong stability in the hospitals/residents problem.
\newblock In {\em Annual Symposium on Theoretical Aspects of Computer Science},
  pages 439--450. Springer, 2003.

\bibitem{IMY14}
K.~Iwama, S.~Miyazaki, and H.~Yanagizawa.
\newblock A 25/17-approximation algorithm for the stable marriage problem with
  one-sided ties.
\newblock {\em Algorithmica}, 68(3):758--775, 2014.

\bibitem{IMMM99}
Kazuo Iwama, David Manlove, Shuichi Miyazaki, and Yasufumi Morita.
\newblock Stable marriage with incomplete lists and ties.
\newblock In {\em Proc. 26th International Colloquium on Automata, Languages,
  and Programming (ICALP 1999)}, pages 443--452. Springer, 1999.

\bibitem{IMY07}
Kazuo Iwama, Shuichi Miyazaki, and Naoya Yamauchi.
\newblock A 1.875-approximation algorithm for the stable marriage problem.
\newblock In {\em Proc. Eighteenth annual ACM-SIAM symposium on Discrete
  algorithms (SODA 2007)}, pages 288--297. SIAM, Philadelphia, 2007.

\bibitem{IMY08}
Kazuo Iwama, Shuichi Miyazaki, and Naoya Yamauchi.
\newblock A ($2-c\frac{1}{\sqrt{n}}$)-approximation algorithm for the stable
  marriage problem.
\newblock {\em Algorithmica}, 51(3):342--356, 2008.

\bibitem{iwama201425}
Kazuo Iwama, Shuichi Miyazaki, and Hiroki Yanagisawa.
\newblock A 25/17-approximation algorithm for the stable marriage problem with
  one-sided ties.
\newblock {\em Algorithmica}, 68(3):758--775, 2014.

\bibitem{khot2002power}
Subhash Khot.
\newblock On the power of unique 2-prover 1-round games.
\newblock In {\em Proceedings of the thiry-fourth annual ACM symposium on
  Theory of computing}, pages 767--775, 2002.

\bibitem{Kiraly11}
Zolt{\'a}n Kir{\'a}ly.
\newblock Better and simpler approximation algorithms for the stable marriage
  problem.
\newblock {\em Algorithmica}, 60(1):3--20, 2011.

\bibitem{kiraly2012linear}
Zolt{\'a}n Kir{\'a}ly.
\newblock Linear time local approximation algorithm for maximum stable
  marriage.
\newblock In {\em Proc. Second International Workshop on Matching Under
  Preferences (MATCH-UP 2012)}, page~99, 2012.

\bibitem{Kiraly13}
Zolt{\'a}n Kir{\'a}ly.
\newblock Linear time local approximation algorithm for maximum stable
  marriage.
\newblock {\em Algorithms}, 6(3):471--484, 2013.

\bibitem{Mcdermid09}
Eric McDermid.
\newblock A 3/2-approximation algorithm for general stable marriage.
\newblock In {\em Proc. 36th International Colloquium on Automata, Languages,
  and Programming (ICALP 2009)}, pages 689--700. Springer, 2009.

\bibitem{miyazaki2019jointly}
Shuichi Miyazaki and Kazuya Okamoto.
\newblock Jointly stable matchings.
\newblock {\em Journal of Combinatorial Optimization}, 38(2):646--665, 2019.

\bibitem{paluch2011faster}
Katarzyna Paluch.
\newblock Faster and simpler approximation of stable matchings.
\newblock In {\em Proc. 9th International Workshop on Approximation and Online
  Algorithms (WAOA 2011)}, pages 176--187, 2011.

\bibitem{Paluch14}
Katarzyna Paluch.
\newblock Faster and simpler approximation of stable matchings.
\newblock {\em Algorithms}, 7(2):189--202, 2014.

\bibitem{pini2013stability}
Maria~Silvia Pini, Francesca Rossi, K~Brent Venable, and Toby Walsh.
\newblock Stability, optimality and manipulation in matching problems with
  weighted preferences.
\newblock {\em Algorithms}, 6(4):782--804, 2013.

\bibitem{rastegari2014reasoning}
Baharak Rastegari, Anne Condon, Nicole Immorlica, Robert Irving, and Kevin
  Leyton-Brown.
\newblock Reasoning about optimal stable matchings under partial information.
\newblock In {\em Proceedings of the fifteenth ACM conference on Economics and
  computation}, pages 431--448, 2014.

\bibitem{Roth84}
A.~E. Roth.
\newblock The evolution of the labor market for medical interns and residents:
  A case study in game theory.
\newblock {\em The Journal of Political Economy}, 92(6):991--1016, 1984.

\bibitem{schrijver2003combinatorial}
Alexander Schrijver.
\newblock {\em Combinatorial Optimization: Polyhedra and Efficiency},
  volume~24.
\newblock Springer, 2003.

\bibitem{yanagisawa2007approximation}
Hiroki Yanagisawa.
\newblock Approximation algorithms for stable marriage problems.
\newblock {\em PhD thesis, Kyoto University, Graduate School of Informatics},
  2007.

\bibitem{yokoi2021approximation}
Yu~Yokoi.
\newblock An approximation algorithm for maximum stable matching with ties and
  constraints.
\newblock In {\em Proc. 32nd International Symposium on Algorithms and
  Computation (ISAAC 2021)}. Schloss Dagstuhl-Leibniz-Zentrum f{\"u}r
  Informatik, 2021.

\end{thebibliography}
\end{document}